\newtheorem{lemma}{Lemma}
\theoremstyle{definition}
\newtheorem{definition}{Definition}
\newcommand{\bra}[1]{\langle #1|}
\newcommand{\ket}[1]{| #1 \rangle }
\renewcommand{\tr}[1]{{\rm tr}[#1]}
\newcommand{\eqref}[1]{{[\ref{#1}]}}
\newcommand{\be}{\begin{eqnarray}}
\newcommand{\ee}{\end{eqnarray}}
\newcommand{\cE}{{\cal E}}
\newcommand{\cI}{{\cal I}}
\newcommand{\cF}{{\cal F}}
\newcommand{\cC}{{\cal C}}
\newcommand{\cS}{{\cal S}}
\newcommand{\cH}{{\cal H}}
\newcommand{\cL}{{\cal L}}
\newcommand{\cJ}{{\cal J}}
\newcommand{\sT}{{\sf T}}
\begin{document}
\title{Entanglement annihilating and entanglement breaking channels}
\author{Lenka Morav\v c\'\i kov\'a$^{1}$, M\'ario Ziman$^{1,2}$}
\address{
$^{1}$Research Center for Quantum Information, Institute of Physics, \\ Slovak Academy of Sciences, D\'ubravsk\'a cesta 9, 845 11 Bratislava, Slovakia \\
$^{2}$Faculty of Informatics, Masaryk University, \\ Botanick\'a 68a, 602 00 Brno, Czech Republic
}
\ead{ziman@savba.sk}
\begin{abstract}
We introduce and 
investigate a family of entanglement-annihilating channels. These channels
are capable to destroy any quantum entanglement within the system they act on. 
We show that they are not necessarily entanglement-breaking.
In order to achieve this result we analyze the subset of locally 
entanglement-annihilating channels. In this case, same 
local noise applied on each 
subsystem individually is less entanglement-annihilating 
(with respect to multi-partite entanglement) 
as the number of subsystems is increasing. Therefore, bipartite case
provides restrictions on the set of local entanglement-annihilating
channels for multipartite case. The introduced concepts are 
illustrated on the family of single-qubit depolarizing channels. 
\end{abstract}
%\pacs{03.67.Lx,03.65.Ta}
\submitto{\JPA}
%\maketitle

%%%%%%%%%%%%%%%%%%%%%%%%%%%%%%%%%%%%%%%%%% introduction
\section{Introduction}
%%%%%%%%%%%%%%%%%%%%%%%%%%%%%%%%%%%%%%%%%%%%%%%%%%%%%%%%%%%%%%%%%%%%%%%%%
The phenomenon of quantum entanglement \cite{schrodinger}
was recognized as an important resource in many applications of 
quantum information theory \cite{nielsen2000}. The parallel 
power of quantum computers as well as the security of quantum 
cryptosystems relies on the peculiar properties of entangled states of 
composite systems. For example, Shor's algorithm \cite{shor}, or
quantum teleportation \cite{teleportation}, could not be invented 
and successful without the puzzling properties of quantum 
entanglement. 

By definition, {\it entanglement} is a property 
assigned to multipartite quantum states. Following
Werner \cite{werner1984}, we say that a state 
$\omega$ of some bipartite system is separable, if it can be 
expressed as a convex combination of factorized states, i.e. 
written in the form
$\omega=\sum_j p_j \xi_j\otimes\zeta_j$. If it cannot, we say it 
is {\it entangled}. Entangled states exhibit their nonlocal origin 
in the following sense. Spatially separated experimentalists cannot 
create entanglement without some exchange of quantum systems, i.e. 
only by local actions and classical communication. 

The concepts of entanglement and separability 
can be directly generalized to the multipartite
case. Moreover, in this case a more subtle ``entanglement-induced"
separation of the state space is possible. For example, the so-called GHZ state 
$\ket{{\rm GHZ}}=\frac{1}{\sqrt{2}}(\ket{000}+\ket{111})$ is an example
of an entangled three-partite state, but no pair of subsystems 
is mutually entangled, because each pair is described by 
the classically correlated state 
$\frac{1}{2}(\ket{00}\bra{00}+\ket{11}\bra{11})$. Qualitatively different
family \cite{dur} 
of three-partite states is represented by
a state $\ket{W}=\frac{1}{\sqrt{3}}(\ket{001}+\ket{010}+\ket{100})$.
In this case, each pair of subsystems is entangled. This 
illustrates that the entanglement theory of multipartite systems 
is more complex 
and represents an interesting field of research. 

The tasks related to detection and characterization of entanglement 
represent prominent problems of quantum entanglement theory 
\cite{horodecki,plenio}. In this paper, we will pay attention to
entanglement dynamics induced by the evolution of individual
quantum subsystems. Since perfectly isolated quantum systems are very 
difficult to achieve in real experiments, unavoidable noise affects 
the states and potentially causes changes in the shared entanglement. 
It is of importance to understand the robustness of entanglement
with respect to these (local) processes.

For example, 
in order to perform the GHZ experiment \cite{ghz} the three-partite 
entangled state $\ket{{\rm GHZ}}$ must be distributed to the laboratories 
of Gina, Helen and Zoe. However, it is very likely that
the transmissions over long distances are not perfect and 
Gina, Helen and Zoe will actually receive and work with systems 
described by a modified quantum state $\omega^\prime_{{\rm GHZ}}$. 
Also, the storage of quantum systems and performing the 
experiments themselves represent additional sources of noise. Depending on the
particular type of overall noise the modified state may or may not 
be still used to perform the intended experiment and
observe a certain phenomenon. 

The better we understand the dynamical 
robustness of entanglement, the better we can perform 
multipartite experiments. Entanglement 
is the quantum resource, but it seems to be very
fragile, which reduces our abilities to conduct scalable (in any sense) 
experiments. For this purposes its distribution, storage and careful 
local manipulation is important. Loosely speaking,
our goal is to separate the "bad" noise from noise that is relatively 
"nice". In other words, which types of environmental influences 
must Gina, Helen and Zoe try to avoid, and which ones are acceptable? 
Intriguing questions are related to experiments with an increasing number of 
parties. Namely, are there local channels destroying any entanglement 
completely? How does it depend on the number of parties? For a given local 
channel is there always a number of parties, for which its action on each 
individual subsystem completely destroys any shared entanglement? Under the
presence of arbitrary local noise is there any limit on the number 
of particles that can be entangled? Such questions are partially addressed
in this paper.

The interplay between the entanglement we created
and its dynamical stability 
under particular sources of noise is currently a vivid field of research. 
Zanardi et al. in \cite{zanardi2000,zanardi2001} asked the question 
which unitary  transformations are better in creating entanglement. 
Linden et al. have shown in \cite{linden2009} that capacities of unitary 
channels to create and destroy entanglement are not the same. In particular, 
there are unitary channels that can create (on average) more entanglement than 
they can destroy. Zyczkowski et al. \cite{zyczkowski2001} 
analyzed the dynamics of entanglement for various different models 
of nonunitary evolutions. They showed the basic qualitative features
how entanglement evolves in time. Since then the phenomenon 
known as entanglement sudden death attracted relatively many 
researchers (see \cite{yu2009} and references therein) who
analyzed many dynamical models and made many 
observations concerning the entanglement dynamics. 
In \cite{ziman05,kinoshita,ziman06} it was shown that
local channels do not preserve entanglement-induced ordering. 
After its action originally more entangled states can become 
less entangled than states coming from some
originally less entangled states. Recently, an evolution equation 
(in fact inequality) for the entanglement affected by local independent
noise has been formulated \cite{tiersch2008,yu2008,zg_li2009}.

In this paper we focus on 
channels that completely destroy any entanglement. Clearly,
unitary channels do not possess such property. For them entanglement
creation goes always in hand with entanglement annihilation. However,
the situation becomes more interesting when general nonunitary evolutions
are considered. In Section II we present our definitions and list the 
basic properties 
of the so-called {\it entanglement-breaking} and 
{\it entanglement-annihilating} channels. 
From the perspective of these concepts, in Section III we 
investigate the channels acting locally on a multipartite composite system. 
In Section IV we study a family of depolarizing channels in more detail.
Finally, we summarize our observations in Section V.

%%%%%%%%%%%%%%%%%%%%%%%%%%%%%%%%%%%%%%%%%%%%%%%%%%%%%%%%%%%%%%%%%%%%%%%%% 
\section{Preliminaries}
%%%%%%%%%%%%%%%%%%%%%%%%%%%%%%%%%%%%%%%%%%%%%%%%%%%%%%%%%%%%%%%%%%%%%%%%% 
A composite quantum system $Q$ consisting of $n$ quantum systems 
is associated with a Hilbert space 
$\cH_Q\equiv\cH^{(n)}=\cH_1\otimes\cdots\otimes\cH_n$. 
Its states are represented by so-called density operators, i.e.
positive operators with a unit trace. Let us denote by
$\cS(\cH)=\{\varrho:\varrho\geq O, \tr{\varrho}=1\}$ the set
of all states of a system associated with the Hilbert space $\cH$.
We divide the system $Q$ into two subsystems $A$ and $B$ consisting
of $k$ and $n-k$ particles with Hilbert spaces $\cH_A=\cH^{(k)}$, 
$\cH_B=\cH^{(n\setminus k)}$, respectively. When denoting 
the total Hilbert space as $\cH_{AB}=\cH_A\otimes\cH_B$
 we mean that the whole system is understood as
a bipartite system consisting of subsystems $A$ and $B$.

Any Hilbert space $\cH$ with a defined tensor structure 
we can divide into two subsets $\cS_{\rm ent}(\cH),\cS_{\rm sep}(\cH)$ 
of entangled and separable states. In particular,
$\cS_{\rm sep}(\cH_Q)$ is the set of all separable states with respect
to the division of $\cH_Q$ into $n$ particles 
($n$-partite separability), i.e. it consists of states of the form
$\varrho=\sum_j p_j \varrho_1^{(j)}\otimes\cdots\otimes\varrho_n^{(j)}$. 
Similarly, the set $\cS_{\rm sep}(\cH_A)$
contains separable states of $k$ particles forming the subsystem
$A$. However, we will use $\cS_{\rm sep}(\cH_{AB})$ to denote the
set of separable states with respect to division of $\cH_{AB}$ 
into subsystems $A$ and $B$ (bipartite separability), i.e.
this set consists of states that can be expressed as 
$\varrho=\sum_j p_j\varrho_A^{(j)}\otimes\varrho_B^{(j)}$.
In such case the internal structure of the 
composite subsystems $A$ and $B$ is irrelevant and 
$\cS_{\rm sep}(\cH_Q)\varsubsetneq\cS_{\rm sep}(\cH_{AB})$ (meaning
$\cS_{\rm sep}(\cH_Q)$ is a strict subset of $\cS_{\rm sep}(\cH_{AB})$).
The analogous notation will be used for subsets of entangled states 
$\cS_{\rm ent}(\cH_Q)$, $\cS_{\rm ent}(\cH_A)$, and $\cS_{\rm ent}(\cH_{AB})$.

The evolution of quantum systems is described by means of quantum 
channels, i.e. completely positive trace-preserving linear maps $\cE$ 
defined on the set of all linear operators $\cL(\cH)$ on the considered 
Hilbert space $\cH$. A linear mapping $\cE:\cL(\cH)\to\cL(\cH)$
defines a quantum channel if $\tr{\cE[X]}=\tr{X}$ for all $X\in\cL(\cH)$
and $(\cE\otimes\cI)[X]$ remains a positive operator for all positive operators
$X\in\cL(\cH\otimes\cH_{\rm anc})$, where $\cH_{\rm anc}$ is the Hilbert space
associated with an ancillary system of arbitrary size. We use 
$\cI$ to denote the identity (trivial) channel on the ancillary system. 
In what follows we will denote the ancillary system by $B$, thus, 
in our further consideration the subsystem $B$ can be of arbitrary size 
and structure.

\begin{definition}
We say the channel $\cE_A$ acting on the subsystem $\cH_A$ is
\begin{itemize}
\item{\it entanglement-annihilating} (EA) if
$$\cE_A[\cS(\cH_A]\subset\cS_{\rm sep}(\cH_A)\,.$$
\item{\it entanglement-breaking} (EB) if 
$$\cE_A\otimes\cI_B[\cS(\cH_{AB})]\subset\cS_{\rm sep}(\cH_{AB})$$
for arbitrary ancillary system $B$.
\end{itemize}
\end{definition}
Thus, the entanglement-annihilating (EA) channels are defined as the ones
that completely destroy/annihilate any entanglement within the 
subset $A$ of the composite system (see Fig.~\ref{fig:1}). 
On contrary, the entanglement-breaking 
(EB) channels are those that completely disentangle the subsystem 
they are acting on from the rest of the system. Note that 
by definition the EA channels (acting on subsystem $A$) do not 
necessarily disentangle the subsystems $A$ and $B$. Similarly, the
EB channels do not necessarily destroy entanglement within the subsystem $A$. 
The two concepts are thus (by definition) different and our aim 
is to investigate their mutual relationship.

\begin{figure}
\begin{center}
\includegraphics[width=8cm]{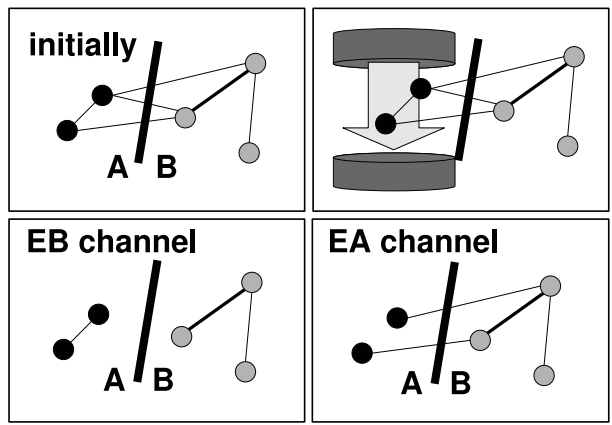}
\caption{The action of entanglement-breaking and entanglement-annihilating
channel is illustrated. Lines between the systems exhibits the existence of 
entanglement.}
\end{center}
\label{fig:1}
\end{figure}

Let us denote by $\sT(\cH)$ the set of all linear maps 
$\cE:\cL(\cH)\to\cL(\cH)$. Let $\sT_{\rm chan}(\cH)$ be
the set of all channels on system associated
with a Hilbert space $\cH$, i.e. $\sT_{\rm chan}(\cH_A)$ 
are all channels defined on
subsystem $A$. Let $\sT_{\rm EA}(\cH_A)$ and 
$\sT_{\rm EB}(\cH_A)$ denote the subsets of EA and EB channels, respectively.

\subsection{Basic properties}

In quantum  theory,  measurements are associated with so-called
positive operators valued measures (POVMs), i.e. collections of positive 
operators $F_1,\dots,F_n$ such that $\sum_j F_j=I$. As it was shown
in Ref.~\cite{horodecki_eb} any entanglement-breaking
channel can be understood as a {\it measure and prepare} procedure.
That is, each EB channel can be expressed in the form
\be
\cE_A[\,\cdot\,]=\sum_j \tr{\,\cdot\,F_j}\,\varrho_j\,,
\label{eq:eb_channel}
\ee
for some POVM $\{F_j\}$ and some fixed states
$\varrho_1,\dots,\varrho_n$. 

To verify whether a given channel is entanglement-breaking or not, 
is equivalent to detecting whether a specific bipartite quantum state 
is separable, or entangled. Let us denote by  $P_+=\ket{\psi_+}\bra{\psi_+}$ 
a projector onto the maximally entangled vector state 
$\ket{\psi_+}=\frac{1}{\sqrt{d}}\sum_j \ket{\varphi_j\otimes\varphi_j}$
of Hilbert space $\cH_A\otimes\cH_B$, where $\cH_B\equiv\cH_A$,
and vectors $\{\ket{\varphi_j}\}_j$ form an orthonormal basis of the 
Hilbert space $\cH_A$ of dimension $d$. A mapping 
$\cJ:\sT(\cH_A)\to\cL(\cH_A\otimes\cH_B)$ defined via the identity 
\cite{pillis,choi,jamiolkowski}
\be
\cJ(\cE_A)=(\cE_A\otimes\cI_B)[P_+]\equiv\Omega_\cE\,,
\ee
determines a unique operator $\Omega_\cE$ for each linear mapping
$\cE\in\sT(\cH_A)$. It is known as the Choi-Jamiolkowski
isomorphism. The Choi-Jamiolkowski operator
$\Omega_\cE$ provides an alternative representation of a quantum channel
(acting on the system associated with the Hilbert space $\cH_A$)
as a specific linear operator on the Hilbert space $\cH_A\otimes\cH_B$.
The complete positivity of $\cE_A$ is translated to the
positivity of $\Omega_\cE$ and the trace-preserving condition is equivalent
with ${\rm tr}_{A}\Omega_\cE=\frac{1}{d}I$, thus $\Omega_\cE$
is a valid density operator on $\cH_A\otimes\cH_B$. Clearly,
if $\cE_A$ is entanglement-breaking, then $\Omega_\cE$ is a 
separable state on $\cH_A\otimes\cH_B$. Surprisingly, the inverse 
implication is also true \cite{horodecki_eb}, 
i.e. the separability of $\Omega_\cE$ is necessary and sufficient
for $\cE_A$ being entanglement-breaking. This significantly simplifies the 
analysis of channels with respect to entanglement-breaking, because
it is sufficient to test the action of the channel only on a single
state -- the maximally entangled state and test whether $\Omega_\cE$
is separable, or not. 

While entanglement-breaking channels had been already investigated, 
the concept of entanglement-annihilating channels is new. An interesting
question is how to test whether a given channel is 
entanglement-annihilating, or not. If $\cE[P_\psi]$ with
$P_\psi=\ket{\psi}\bra{\psi}$ is separable for all pure states
$\ket{\psi}\in\cH$, then for any state $\omega\in\cS(\cH)$ the state
$\cE[\omega]$ is separable as well. This follows from
the fact that the set of separable states is convex and and
any state $\omega$ can be decomposed into a convex 
combination of pure states, i.e. $\omega=\sum_j p_j P_{\psi_j}$.
That is, whether, or not the channel is entanglement-annihilating, 
it is sufficient to test its action only on pure states. 
Unfortunately, this is still not an easy task. We left open whether 
there exists a simpler way of testing for the EA property. 

Let us continue with simple observations on elementary properties of
the sets of entanglement-annihilating and entanglement-breaking channels.
\begin{lemma}
$\sT_{\rm EA}(\cH_A),\sT_{\rm EB}(\cH_A)$ are convex.
\end{lemma}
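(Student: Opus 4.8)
The plan is to prove convexity of each set directly from the definitions, treating the two cases in parallel since they differ only in whether one adjoins an ancilla. Let $\cE,\cF$ be two channels in the set under consideration, and let $p\in[0,1]$. I first note that the convex combination $\cG=p\cE+(1-p)\cF$ is again a channel: it is trace-preserving because both summands are, and it is completely positive because complete positivity is preserved under convex combinations (each $\cE\otimes\cI$ and $\cF\otimes\cI$ maps positive operators to positive operators, hence so does their convex combination). So $\cG\in\sT_{\rm chan}(\cH_A)$, and it only remains to check the separability condition.

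The key geometric fact I would invoke is that $\cS_{\rm sep}(\cH_A)$ (and likewise $\cS_{\rm sep}(\cH_{AB})$) is itself a convex set: a convex combination of separable states is separable, which is immediate from the definition $\varrho=\sum_j p_j\varrho_1^{(j)}\otimes\cdots$ as a convex combination of product states. For the EA case, take any input state $\omega\in\cS(\cH_A)$. Then
\begin{eqnarray}
\cG[\omega]=p\,\cE[\omega]+(1-p)\,\cF[\omega],
\end{eqnarray}
and since $\cE[\omega]$ and $\cF[\omega]$ both lie in $\cS_{\rm sep}(\cH_A)$ by hypothesis, their convex combination $\cG[\omega]$ lies there as well. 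As $\omega$ was arbitrary, $\cG\in\sT_{\rm EA}(\cH_A)$, proving that $\sT_{\rm EA}(\cH_A)$ is convex.

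For the EB case the argument is identical once the ancilla is attached. Fixing an arbitrary ancillary system $B$ and an arbitrary input $\omega\in\cS(\cH_{AB})$, linearity of the tensor product in the first factor gives $(\cG\otimes\cI_B)[\omega]=p\,(\cE\otimes\cI_B)[\omega]+(1-p)\,(\cF\otimes\cI_B)[\omega]$; each term on the right is separable in $\cS_{\rm sep}(\cH_{AB})$ by the EB hypothesis, so the convex combination is separable too, and $\cG$ is EB. I do not anticipate a genuine obstacle here, as the statement follows from convexity of the separable sets together with the linearity of channel action; the only point demanding any care is the verification that the convex combination of channels is itself a legitimate channel, which I would dispatch via the preservation of complete positivity and trace under convex mixing noted above.
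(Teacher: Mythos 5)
Your proof is correct and follows essentially the same route as the paper's: evaluate the convex combination of channels pointwise on an arbitrary input and invoke convexity of the set of separable states. The only difference is that you additionally verify that the convex combination is itself a channel, a point the paper leaves implicit.
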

\begin{proof}
By definition, if $\cE_1,\cE_2\in\sT_{\rm EA}$, then
$\cE_j[\cS(\cH_A)]\subset\cS_{\rm sep}(\cH_A)$ for $j=1,2$.
Due to convexity of $\cE_j(\cS(\cH_A))$ and $\cS_{\rm sep}(\cH_A)$
it follows that also $(\lambda\cE_1+(1-\lambda)\cE_2)[\cS(\cH_A)]
\subset\cS_{\rm sep}(\cH_A)$. Similarly for the case of EB channels.
\end{proof}

\begin{lemma}
If $\cE\in\sT_{\rm EA}(\cH_A)$ and $\cF\in\sT(\cH_A)$,
then $\cE\cdot\cF\in\sT_{\rm EA}(\cH_A)$.
\end{lemma}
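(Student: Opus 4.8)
The plan is to reduce the claim to the defining set-inclusion property of $\sT_{\rm EA}(\cH_A)$, exploiting the fact that the EA condition is imposed on the \emph{entire} state space $\cS(\cH_A)$ rather than on any particular state. First I would fix the reading of the composition $\cE\cdot\cF$ as the map $\omega\mapsto\cE[\cF[\omega]]$, so that $\cF$ acts first and $\cE$ second. Since $\cE$ is a channel and $\cF$ is a channel (completely positive and trace preserving), their composition is again completely positive and trace preserving, hence $\cE\cdot\cF\in\sT_{\rm chan}(\cH_A)$ and it is meaningful to ask whether it lies in $\sT_{\rm EA}(\cH_A)$.

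The central step is the observation that a channel maps the state space into itself: because $\cF$ is positive and trace preserving, $\cF[\cS(\cH_A)]\subset\cS(\cH_A)$. Applying $\cE$ to both sides I would obtain
\[ (\cE\cdot\cF)[\cS(\cH_A)]=\cE[\cF[\cS(\cH_A)]]\subset\cE[\cS(\cH_A)]. \]
The hypothesis $\cE\in\sT_{\rm EA}(\cH_A)$ gives $\cE[\cS(\cH_A)]\subset\cS_{\rm sep}(\cH_A)$, and therefore $(\cE\cdot\cF)[\cS(\cH_A)]\subset\cS_{\rm sep}(\cH_A)$, which is precisely the statement that $\cE\cdot\cF$ is entanglement-annihilating.

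There is essentially no hard part here: the lemma expresses the intuitive fact that once a channel annihilates all entanglement, no preprocessing by $\cF$ can restore it, and the proof is a single chain of inclusions. The only point deserving care is the order of composition together with the verification that $\cF$ does not leave the state space; for the intermediate operator $\cF[\omega]$ to remain a valid state one genuinely needs $\cF$ to be a channel (positive and trace preserving), not merely an arbitrary element of $\sT(\cH_A)$. Were $\cF$ only trace preserving but not positive, $\cF[\omega]$ need not be a state and the argument would fail. Reading $\cF$ as a channel, as the surrounding discussion does, the result follows at once.
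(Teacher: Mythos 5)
Your proof is correct and follows the same route as the paper's: the defining inclusion $\cE[\cS(\cH_A)]\subset\cS_{\rm sep}(\cH_A)$ applied to the image $\cF[\cS(\cH_A)]\subset\cS(\cH_A)$ immediately gives the claim. Your added remark that $\cF$ must genuinely be a channel (not an arbitrary element of $\sT(\cH_A)$ as literally written, since $\sT$ was defined as all linear maps) is a fair and correct observation about the statement's intended reading.
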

\begin{proof}
Defining property of $\cE$ implies that
$\cE[\cF[\cS(\cH_A)]]\subset\cS_{\rm sep}(\cH_A)$, hence
$\cE\cdot\cF$ is an entanglement-annihilating channel. 
\end{proof}

\begin{lemma}
If $\cE\in\sT_{\rm EB}(\cH_A)$ and $\cF\in\sT(\cH_A)$, then 
$\cE\cdot\cF$, $\cF\cdot\cE\in\sT_{\rm EB}(\cH_A)$.
\end{lemma}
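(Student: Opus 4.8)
The plan is to treat the two compositions separately, exploiting the elementary identity $(\cE\cdot\cF)\otimes\cI_B=(\cE\otimes\cI_B)\cdot(\cF\otimes\cI_B)$, which relates a composition on $\cH_A$ to the corresponding composition on the enlarged space $\cH_{AB}$, together with two facts about a channel $\cF$ acting on subsystem $A$: it maps states to states, and (crucially) it maps separable states to separable states. The first is needed for the ``easy'' ordering and the second for the substantive one.

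First I would dispose of $\cE\cdot\cF$, where the entanglement-breaking map is applied last. Fix an arbitrary ancilla $B$ and a state $\omega\in\cS(\cH_{AB})$. Then $(\cF\otimes\cI_B)[\omega]$ is again a legitimate state in $\cS(\cH_{AB})$, since $\cF$ is a channel, so applying the defining property of $\cE$ to it yields $(\cE\otimes\cI_B)[(\cF\otimes\cI_B)[\omega]]\in\cS_{\rm sep}(\cH_{AB})$. As $B$ is arbitrary, this gives $\cE\cdot\cF\in\sT_{\rm EB}(\cH_A)$ immediately.

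The remaining case $\cF\cdot\cE$, with the channel $\cF$ applied after the entanglement-breaking map, carries the actual content. Here $(\cE\otimes\cI_B)[\omega]$ is already separable, say $\sum_j p_j\,\varrho_A^{(j)}\otimes\varrho_B^{(j)}$, and I must verify that local post-processing by $\cF$ cannot recreate entanglement. By linearity, $(\cF\otimes\cI_B)[\sum_j p_j\,\varrho_A^{(j)}\otimes\varrho_B^{(j)}]=\sum_j p_j\,\cF[\varrho_A^{(j)}]\otimes\varrho_B^{(j)}$, and since each $\cF[\varrho_A^{(j)}]$ is a valid state on $\cH_A$, the right-hand side is manifestly separable, so $\cF\cdot\cE\in\sT_{\rm EB}(\cH_A)$ as well. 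The only genuine step is this preservation of separability under local channels; everything else is bookkeeping for tensored compositions, and I expect no real obstacle. The one mild subtlety is to recall that $\cF$ should be read as a channel, so that it sends states to states, which is exactly what guarantees that the intermediate and final operators remain valid density operators.
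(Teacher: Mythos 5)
Your proof is correct and follows essentially the same route as the paper: the case $\cE\cdot\cF$ uses that $\cF\otimes\cI_B$ maps states to states before invoking the EB property of $\cE$, and the case $\cF\cdot\cE$ uses that a local channel cannot create entanglement from a separable state. The only difference is that you spell out the latter fact via the decomposition $\sum_j p_j\,\cF[\varrho_A^{(j)}]\otimes\varrho_B^{(j)}$, whereas the paper simply asserts it; you also correctly flag that $\cF$ must be read as a channel (not an arbitrary element of $\sT(\cH_A)$), which the paper likewise assumes implicitly.
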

\begin{proof}
Since $\cE\in\sT_{\rm EB}(\cH_A)$ it follows that 
$(\cE\otimes\cI)[(\cF\otimes\cI)[\cS(\cH_{AB})]]\subset\cS_{\rm sep}(\cH_{AB})$.
For any $\cF\in\sT(\cH_A)$ the channel $\cF\otimes\cI$ cannot
create entanglement (out of separable state) 
between subsystems $A$ and $B$. Therefore,
$(\cF\otimes\cI)[(\cE\otimes\cI)[\cS(\cH_{AB})]]\subset
\cS_{\rm sep}(\cH_{AB})$, hence the both channels $\cE\cdot\cF,\cF\cdot\cE$
are entanglement breaking providing that one of them is.
\end{proof}

In what follows we will investigate the set relation between 
$\sT_{\rm EA}\equiv\sT_{\rm EA}(\cH_A)$ 
and $\sT_{\rm EB}\equiv\sT_{\rm EB}(\cH_A)$ (see Fig.~\ref{fig:2}), 
both defined on the same Hilbert space
$\cH_A$. As a consequence of the above lemmas we get 
that a composition $\cE\cdot\cF$
of the entanglement-breaking channel $\cF$ and of the entanglement-annihilating 
channel $\cE$ belongs to the intersection 
$\sT_{\rm EA}\cap\sT_{\rm EB}$. 
That is, there are channels which are simultaneously EB and EA.
On the other hand, although the channel $\cF\cdot\cE$ is necessarily 
entanglement-breaking, it does not have to be entanglement-annihilating. 
For example, a single-point contraction $\cF$ of the whole state space
into a single entangled state $\omega\in\cS_{\rm ent}(\cH_A)$ 
is entanglement-breaking, because it can be expressed in the form 
$\cF[\cdot]=\sum_j \tr{\cdot F_j}\omega=\omega$. However, the 
channel $\cF\cdot\cE$ is not entanglement-annihilating 
for arbitrary $\cE\in\sT_{\rm EA}$, because $(\cF\cdot\cE)[\cS(\cH_A)]
=\omega\in\cS_{\rm ent}(\cH_A)$. This means there are entanglement-breaking 
channels which are not entanglement-annihilating, i.e.
$\sT_{\rm EB}\not\subset\sT_{\rm EA}$. Later on we shall get back 
also to the inverse question whether $\sT_{\rm EA}\subset\sT_{\rm EB}$, 
or not.

\begin{lemma}
Let $\cE[\cdot]=\sum_j \tr{\cdot F_j}\varrho_j$, i.e. $\cE\in\sT_{\rm EB}$. 
Then the following statements hold:
\begin{itemize}
\item[(i)]
If $\varrho_j\in\cS_{\rm sep}(\cH_A)$ for all $j$, then
$\cE\in\sT_{\rm EA}$.
\item[(ii)]
If there exists $\ket{\varphi}\in\cH_A$ such that 
$F_j\ket{\varphi}=\ket{\varphi}$ for some $j$, 
then $\cE\in\sT_{\rm EA}$ only if $\varrho_j$ is separable.
\end{itemize}
\end{lemma}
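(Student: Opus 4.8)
The plan is to handle the two parts independently, in each case reducing the entanglement-annihilating property to a statement about a convex combination of the prepared states $\varrho_j$.

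For part (i), I would feed the channel an arbitrary input $\omega\in\cS(\cH_A)$ and read off the output $\cE[\omega]=\sum_j p_j\,\varrho_j$ with weights $p_j=\tr{\omega F_j}$. These weights form a probability distribution: positivity of $\omega$ and of each $F_j$ gives $p_j\geq 0$, while the completeness relation $\sum_j F_j=I$ together with $\tr{\omega}=1$ gives $\sum_j p_j=1$. Since every $\varrho_j$ is separable and $\cS_{\rm sep}(\cH_A)$ is convex, the output is a separable state for every $\omega$, so $\cE\in\sT_{\rm EA}$. This direction is immediate.

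For part (ii), the idea is to probe the channel with the single pure input $P_\varphi=\ket{\varphi}\bra{\varphi}$ and show that the eigenvalue-one hypothesis makes the output collapse onto $\varrho_j$. First I would record that $F_j\ket{\varphi}=\ket{\varphi}$ yields $\bra{\varphi}F_j\ket{\varphi}=\ip{\varphi}{\varphi}=1$. The completeness relation then forces all other outcomes to have vanishing probability on $\ket{\varphi}$: since each $\bra{\varphi}F_k\ket{\varphi}\geq 0$ by positivity and $\sum_k\bra{\varphi}F_k\ket{\varphi}=1$, every term with $k\neq j$ must be zero. Consequently $\cE[P_\varphi]=\sum_k\bra{\varphi}F_k\ket{\varphi}\,\varrho_k=\varrho_j$. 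Now if $\cE$ is entanglement-annihilating, then $\cE[P_\varphi]$ is separable, and hence $\varrho_j$ is separable, which is exactly the asserted necessary condition.

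The only delicate point is this vanishing argument in part (ii): one must invoke both the individual positivity of the POVM elements and the completeness relation $\sum_k F_k=I$ to conclude that an eigenvalue-one eigenvector of $F_j$ carries no weight on any other $F_k$. I expect this to be the main, if modest, obstacle; everything else follows once the output has been identified as $\cE[P_\varphi]=\varrho_j$. I would also assume without loss of generality that $\ket{\varphi}$ is normalized, since the hypothesis $F_j\ket{\varphi}=\ket{\varphi}$ is invariant under rescaling of $\ket{\varphi}$.
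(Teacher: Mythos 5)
Your proposal is correct and follows essentially the same route as the paper's proof: part (i) is the convexity of $\cS_{\rm sep}(\cH_A)$ applied to the convex weights $\tr{\omega F_j}$, and part (ii) rests on the identity $\cE[\ket{\varphi}\bra{\varphi}]=\varrho_j$. The only difference is that you explicitly justify this identity (via positivity of the $F_k$ and the completeness relation forcing $\bra{\varphi}F_k\ket{\varphi}=0$ for $k\neq j$), whereas the paper simply asserts it; your derivation is the correct one and fills in that small step.
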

\begin{proof}
The first part {\it (i)} is obvious, because convex sum of separable states
is necessarily a separable state. The second
part {\it (ii)} follows from the formula
$\cE[\ket{\varphi}\bra{\varphi}]=\varrho_j$, which implies that 
being entanglement-annihilating requires that $\varrho_j$ is separable. 
\end{proof}
The second half of this lemma can be used to show that its first half
can not be {\it if and only if} statement. Consider an entangled state $\omega$.
Let us define $\kappa$ as the largest value of $x\in[0,1]$ 
for which the state $x\omega+(1-x)\frac{1}{d}I$ is separable.
This value is strictly larger than 0.
Let $F$ be a positive operator with all the eigenvalues smaller
than $\kappa$. Then
$\cE[\cdot]=\tr{\cdot F}\omega+\tr{\cdot(I-F)}\frac{1}{d}I$
defines an entanglement-annihilating channel, because
$\cE[\varrho]=x\omega+(1-x)\frac{1}{d}I$ with $x=\tr{\varrho F}<\kappa$.
Thus, $\cE\in\sT_{\rm EA}\cap\sT_{\rm EB}$ does not imply that
all $\varrho_j$ are necessarily separable.

\begin{figure}
\begin{center}
\includegraphics[width=8cm]{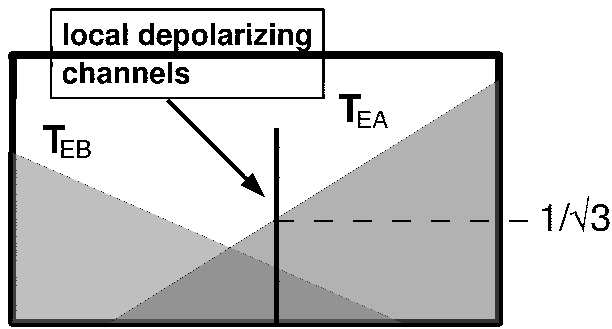}
\caption{This figure schematically illustrates the subsets 
of entanglement-annihilating and
entanglement-breaking channels. The family of two-qubit local 
depolarizing channels $\cE_\lambda\otimes\cE_\lambda$ is depicted, too. }
\label{fig:2}
\end{center}
\end{figure}

%%%%%%%%%%%%%%%%%%%%%%%%%%%%%%%%%%%%%%%%%%%%%%%%%%%%%%%%%%%%%%%%%%%%%%%%% 
\section{Local channels}
%%%%%%%%%%%%%%%%%%%%%%%%%%%%%%%%%%%%%%%%%%%%%%%%%%%%%%%%%%%%%%%%%%%%%%%%% 
We distinguish two basic types of channels acting on a composite system
of $k$ particles: global and local. We say a channel $\cF$ is local 
if it has a tensor product form 
$\cF=\cE_1\otimes\cdots\otimes\cE_k$, where
$\cE_j$ are channels acting on individual particles. If a channel
does not have the factorized form we say it is global. In what follows 
we will investigate entanglement dynamics under local channels. 
Moreover, we will assume that Hilbert spaces of all particles are isomorphic
and each particle undergoes the same evolution, i.e. $\cE_j=\cE$ for all $j$. 
Although this is not the most general case, under certain circumstances 
it is of physical relevance. 

We say a single-particle channel $\cE$ is a {\it $k$-locally entanglement
annihilating} channel ($k$-LEA), 
if $\cE^{\otimes k}\in\sT_{\rm EA}(\cH^{\otimes k})$.
Similarly, $\cE$ is a {\it $k$-locally entanglement breaking} channel ($k$-LEB) 
if $\cE^{\otimes k}\in\sT_{\rm EB}(\cH^{\otimes k})$.
By $\sT_{k\rm -LEA}$, $\sT_{k\rm -LEB}$ we shall denote the
subsets of $k$-LEA and $k$-LEB channels, respectively. 
Since elements of these sets are uniquely associated
with single particle channels $\cE\in\sT_{\rm chan}(\cH)$, 
we can understand these sets as subsets of $\sT_{\rm chan}(\cH)$, i.e. 
$\sT_{k\rm -LEA},\sT_{k\rm -LEB}\subset\sT_{\rm chan}(\cH)$. 
Moreover, let us denote by $\sT^1_{\rm EB}\subset\sT_{\rm chann}(\cH)$ 
the subset of entanglement breaking channels acting on the single system $\cH$,
i.e. $\cE\in\sT^1_{\rm EB}$ means that $(\cE\otimes\cI_{\rm anc})[\omega]$ 
is separable for all $\omega\in\cS(\cH\otimes\cH_{\rm anc})$. 
Our goal is to analyze 
the relation between the subsets $\sT_{k\rm -LEA}$,
$\sT_{k\rm -LEB}$ and $\sT^1_{\rm EB}$. The ultimate question is which 
single particle channels $\cE$ if applied to a suitable
number of particles, necessarily destroy any entanglement
within the $k$-partite system. 

By definition single-particle
entanglement-breaking channels disentangle each particle
from the rest of the system. Therefore, they are 
simultaneously locally entanglement-annihilating and 
locally entanglement-breaking channels for any value of $k$, hence
\be
\sT^1_{\rm EB}\subset\sT_{k\rm -LEB}\,;\quad
\sT^1_{\rm EB}\subset\sT_{k\rm -LEA}\,.
\ee
Further, consider a channel $\cE\in\sT_{k\rm -LEA}$. It means $\cE^{\otimes k}$
transforms any state of $k$ particles into some separable state of
$k$ particles, i.e. $\cE^{\otimes k}[\omega]=\sum_j p_j \xi_j^{(1)}\otimes\cdots
\otimes\xi_j^{(k)}$. Setting $\omega=\varrho_0\otimes\omega^\prime$
we get that $\cE^{\otimes(k-1)}[\omega^\prime]$ is separable for any
state $\omega^\prime\in\cS(\cH^{\otimes(k-1)})$. That is, $\cE$
is also $(k-1)$-LEA channel. Consequently, we can write the relation
\be
\sT_{k\rm -LEA}\subset\sT_{l\rm -LEA} \quad {\rm for}\quad k>l\, ,
\ee
which implies
\be
\label{eq:lea_rel}
\sT_{\rm EB}^1\subset
\sT_{\rm\infty-LEA}\subset\cdots\subset\sT_{\rm 3-LEA}\subset\sT_{\rm 2-LEA}\,.
\ee

On the other hand for a channel $\cE^{\otimes k}$ the corresponding
Choi operator takes the form $\Omega_\cE^{\otimes k}$, where
$\Omega_\cE=(\cE\otimes\cI)[P_+]$. If $\cE^{\otimes k}$ is
entanglement breaking, i.e. $\cE\in\sT_{k\rm -LEB}$, then
$\Omega_\cE^{\otimes k}$ is separable with respect to a bipartite
splitting into $k$ principal systems and $k$ ancillary systems. However,
this implies that $\Omega_\cE$ itself is separable, hence, the
single particle channel $\cE$ is entanglement breaking, 
i.e. $\cE\in\sT^1_{\rm EB}$. As a result we get the following
set identities
\be
\sT_{\rm EB}^1 =
\sT_{\rm 2-LEB}=\sT_{\rm 3-LEB}=\cdots =\sT_{\rm \infty-LEB}\,.
\ee 
%%%%%%%%%%%%%%%%%%%%%%%%%%%%%%%%%%%%%%%%%%%%%%%%%%%%%%%%%%%
\section{A case study: depolarizing channels}
%%%%%%%%%%%%%%%%%%%%%%%%%%%%%%%%%%%%%%%%%%%%%%%%%%%%%%%%%%%
In this section we will address the question whether the entanglement-breaking
channels are not the only locally entanglement-annihilating channels. We will
give explicit example of qubit channels that are not entanglement-breaking,
but completely destroy entanglement if applied on individual
particles.

Consider a one-parametric family of {\it depolarizing channels}
\be
\cE_\lambda[X]=\lambda X+(1-\lambda)\tr{X}\,\frac{1}{d}I\,,
\ee
where $\lambda\in[0,1]$. Applying this channel to the maximally
entangled state $P_+$ we get the so-called Werner states \cite{werner1984}
\be
\Omega_\lambda=\lambda P_++(1-\lambda)\frac{1}{d}I\otimes \frac{1}{d}I\,.
\ee
For qubit ($d=2$) the states $\Omega_\lambda$ are separable 
for $\lambda\leq 1/3$. Therefore, if $\lambda\leq 1/3$ the qubit
depolarizing channel $\cE_\lambda$ is entanglement-breaking.

\subsection{2-LEA channels}
The local channel $\cE_\lambda\otimes\cE_\lambda$ acts as follows
\be
\nonumber
\omega_{12}^\prime&=&(\cE_\lambda\otimes\cE_\lambda)[\omega_{12}]
=\lambda^2\omega_{12}+(1-\lambda)^2\frac{1}{d}I\otimes \frac{1}{d}I\\
& & +\lambda(1-\lambda)(\omega_1\otimes \frac{1}{d}I+
\frac{1}{d}I\otimes\omega_2)\,,
\ee
where $\omega_1={\rm tr}_2[\omega_{12}]$ and $\omega_2={\rm tr}_1[\omega_{12}]$.
Unlike in the analysis of entanglement-breaking channels we need
to verify the separability for all input states $\omega_{12}$ in order
to conclude that the channel $\cE_\lambda$ is $2$-LEA. Fortunately,
it is sufficient to analyze the separability for pure states
$\omega_{12}=\ket{\psi}\bra{\psi}$ only, because the set of separable
states is convex and channels preserve the convexity. Let us note that
$\cE_\lambda=\lambda\cI+(1-\lambda)\cC_0$, where $\cC_0$ denotes
the contraction of the whole state space into the complete mixture state 
$\frac{1}{d}I$. Both channels, $\cI,\cC_0$ commutes with unitary
transformations, i.e. $\cI[UXU^\dagger]=U\cI[X] U^\dagger$
and $\cC_0[UXU^\dagger]=U\cC_0[X] U^\dagger$. Consequently, 
$\cE_\lambda\otimes\cE_\lambda$ commutes with all unitary channels 
$U\otimes V$. Using such local unitary channels any
vector $\ket{\psi}$ can be written in its Schmidt form
\be
\ket{\psi}=\sum_j \sqrt{q_j}\ket{\varphi_j\otimes\varphi_j^\prime}\,,
\ee
where $\{\ket{\varphi_j}\}$, $\{\ket{\varphi_j^\prime}\}$ are suitable 
orthonormal bases of the first and the second particle, respectively.
Because of the unitary invariance of $\cE_\lambda\otimes\cE_\lambda$
it is sufficient to consider only vectors expressed in
a single fixed Schmidt basis. The reduced states take the 
diagonal form
\be
\omega_1=\sum_j q_j\ket{\varphi_j}\bra{\varphi_j}\,,\quad
\omega_2=\sum_j q_j\ket{\varphi_j^\prime}\bra{\varphi_j^\prime}\,.
\ee

Further, let us analyze the case of qubit, i.e. $d=2$ and
$\ket{\psi}=\sqrt{q_0}\ket{\varphi_0\otimes\varphi_0^\prime}+
\sqrt{q_1}\ket{\varphi_1\otimes\varphi_1^\prime}$, thus,
\begin{eqnarray}
\nonumber
\omega_{12}^\prime=
\frac{1}{4}\left(
\begin{array}{cccc}
\lambda_-^2+4\lambda q_0 & 0 & 0 & 4\lambda^2\,\sqrt{q_0 q_1}\\
0 & \lambda_+\lambda_- & 0 & 0 \\
0 & 0 & \lambda_+\lambda_- & 0 \\
4\lambda^2\,\sqrt{q_0 q_1} & 0 & 0 & \lambda_-^2+4\lambda q_1
\end{array}
\right)\,,
\end{eqnarray}
where we set $\lambda_\pm=1\pm\lambda$. The eigenvalues of the partially 
transposed operator $\omega_{12}^{\prime\Gamma}$ reads
\be
\mu_1&=&\frac{1}{4}(1-\lambda)^2+\lambda q_0\;;\\
\mu_2&=&\frac{1}{4}(1-\lambda)^2+\lambda q_1\;;\\
\mu_\pm &=& 
\frac{1}{4}(1-\lambda^2\pm 4\lambda^2\sqrt{q_0 q_1})\;;
\ee
where $\lambda\in[0,1]$, $q_0\in[0,1]$ and $q_1=1-q_0$. According to 
Peres-Horodecki criterion \cite{peres,horodecki_transp}
a two-qubit state $\omega$ is separable if and only if $\omega^\Gamma$
is positive. For larger systems this separability
criterion is not sufficient to unambiguously 
distinguish between entangled and separable states. From nonpositivity
of $\omega^\Gamma$ we can conclude that the state is entangled, but
the inverse implication does not hold.

All the eigenvalues of $\omega_{12}^{\prime\Gamma}$ except $\mu_-$ 
are always positive, hence it is sufficient to analyze only
this one. In particular, if for a fixed value of
$\lambda$ the eigenvalue $\mu_-$ is positive for all values $q_0,q_1$, 
then the corresponding depolarizing channel $\cE_\lambda$ is 
a 2-locally entanglement-annihilating channel. Thus, we want to minimize
$\mu_-$ over the interval $q_0\in[0,1]$ for each depolarizing 
channel $\cE_\lambda$. Fortunately, for each $\lambda$ the 
minimum is achieved for the same value $q_0=q_1=1/2$. Consequently,
the partially transposed operator $\omega_{12}^{\prime\Gamma}$ 
is positive if and only if $1-3\lambda^2\ge 0$, i.e.
\be
\lambda\leq\frac{1}{\sqrt{3}}\approx 0.577\,.
\ee
In summary, the qubit depolarizing channel $\cE_\lambda$ is 2-locally 
entanglement-annihilating if and only if $\lambda\in[0,1/\sqrt{3}]$, 
whereas it is entanglement-breaking for $\lambda\in[0,1/3]$, 
hence $\sT_{\rm EB}^1\neq\sT_{2-\rm LEA}$.

\subsection{3-LEA channels}
We have already shown (see Eq.\eqref{eq:lea_rel}) that 
$\sT_{\rm 3-LEA}\subset\sT_{\rm 2-LEA}$. In this section we will investigate 
the inverse relation, namely whether 2-LEA depolarizing channels
$\cE_{\lambda}$ are necessarily also 3-LEA channels. Under the action
of $\cE_\lambda\otimes\cE_\lambda\otimes\cE_\lambda$
a general three-partite state $\omega_{123}$ is transformed
into the state
\be
\nonumber
\omega^\prime_{123}&=&\lambda^3\omega_{123}+
\frac{1}{d^3}(1-\lambda)^3 I_1\otimes I_2\otimes I_3\\
\nonumber & & +\frac{1}{d}\lambda^2(1-\lambda)
(\omega_{12}\otimes I_3+\omega_{13}\otimes I_2+\omega_{23}\otimes I_1)\\
\nonumber & & 
+\frac{1}{d^2}\lambda(1-\lambda)^2
(\omega_{1}\otimes I_{23}+\omega_{2}\otimes I_{13}+\omega_{3}\otimes I_{12})
\,,
\ee
where $I_{jk}=I_j\otimes I_k$ and $I_j$ stands for the identity operator 
on the $j$th particle. Since $\cE_\lambda\in\sT_{2-\rm LEA}$ the
reduced bipartite states $\omega_{12}^\prime,\omega_{13}^\prime,
\omega_{23}^\prime$ are separable. If some entanglement has left in the
composite system, then it must be visible with respect to 
bipartite partitionings $1|23$, or $2|13$, or $3|12$.

As before let us assume the case of qubits ($d=2$) and set 
$\omega_{123}=\ket{{\rm GHZ}}\bra{{\rm GHZ}}$, i.e.
$\omega_{12}=\omega_{13}=\omega_{12}=
\frac{1}{2}(\ket{00}\bra{00}+\ket{11}\bra{11})\equiv\Theta$
and $\omega_1=\omega_2=\omega_3=\frac{1}{2}I$. Thus,
\be
\nonumber
\omega_{\rm GHZ}^\prime&=&\lambda^3\ket{{\rm GHZ}}\bra{{\rm GHZ}}
+\frac{(1-\lambda)^2}{8}I\otimes I\otimes I\\
\nonumber & & +\frac{1}{2}\lambda^2(1-\lambda)
(\Theta_{12}\otimes I+\Theta_{13}\otimes I+\Theta_{23}\otimes I)
\ee
Let us consider the splitting $1|23$ and define the 
basis elements of system $23$ as follows
\be
\ket{{\bf 0}}=\ket{00}\,,
\ket{{\bf 1}}=\ket{11}\,,
\ket{{\bf 2}}=\ket{01}\,,
\ket{{\bf 3}}=\ket{10}\,.
\ee
In this basis
\be
\nonumber
\omega_{\rm GHZ}^\prime&=&
\frac{1}{8}(1-\lambda^2)
\ket{0}\bra{0}\otimes (I_{23}-\ket{{\bf 0}}\bra{{\bf 0}})\\
\nonumber & & 
+\frac{1}{8}(1-\lambda^2)
\ket{1}\bra{1}\otimes (I_{23}-\ket{{\bf 1}}\bra{{\bf 1}})]\\
\nonumber & &
+\frac{1}{8}(1+3\lambda^2)(\ket{0{\bf 0}}\bra{0{\bf 0}}+
\ket{1{\bf 1}}\bra{1{\bf 1}})\\
\nonumber & & 
+\frac{1}{2}\lambda^3(\ket{0{\bf 0}}\bra{1{\bf 1}}+
\ket{1{\bf 1}}\bra{0{\bf 0}})\,.
\ee
The last term plays the crucial role from the point
of partial transposition criterion applied
with respect to splitting $1|23$. Let us note that 
due to symmetry for different splitting $2|13$ and $3|12$ 
we will derive qualitatively the same bipartite density matrix. 
That is, if the state $\omega^\prime_{\rm GHZ}$ is entangled with
respect to the splitting $1|23$, then it is also entangled with respect
to remaining bipartite splittings.

Among all the eigenvalues 
of $\omega_{\rm GHZ}^{\prime\Gamma}$ only
\be
\mu_-=\frac{1}{2}(\frac{1}{4}(1-\lambda^2)-\lambda^3)\, ,
\ee
is negative when $\lambda>0.5567$. Important for us, is that 
for $\lambda=1/\sqrt{3}$ this eigenvalue is negative, hence the 
state remains entangled although the depolarizing channel is 2-LEA. 
Therefore, we can conclude that
\be
\sT_{3\rm -LEA}\subsetneq\sT_{2\rm -LEA}\,.
\ee
Since partial transposition criterion is not sufficient
to conclude the separability, it cannot be used to decide for 
which $\lambda$ the channel $\cE_\lambda$ is 3-LEA and for which it is not. 
If $\lambda > 0.5567$ we can safely say that $\cE_\lambda$ is not 
the 3-locally entanglement-annihilating channel. However, for smaller 
values we cannot exclude the possibility that the channel $\cE_\lambda$ 
does not belong to $\sT_{\rm 3-LEA}$ unless $\lambda\leq 1/3$, when the channel 
is entanglement-breaking.

\subsection{EA vs EB}
In this section we shall get back to the question on relation between
EA and EB channels. We have already shown that there are 
entanglement-breaking channels which are not entanglement-annihilating. 
In what follows we are interested in whether the opposite case is also 
possible, i.e. whether there are entanglement-annihilating channels 
that are entanglement-breaking. Mathematically, we are asking which 
of the relations
$\sT_{\rm EA}\subset\sT_{\rm EB}$, 
$\sT_{\rm EA}\not\subset\sT_{\rm EB}$ hold. 
The derived results we can use to argue whether 2-LEA channels 
$\cE_\lambda\otimes\cE_\lambda$ considered as a channels acting on bipartite 
system are entanglement-breaking, or not. Certainly, they are 
entanglement-annihilating because 
$\sT_{\rm 2-LEA}(\cH\otimes\cH)\subset\sT_{\rm EA}(\cH\otimes\cH)$.
We have seen that for the value $\lambda=1/\sqrt{3}$
the channel $\cE_{\lambda}\otimes\cE_{\lambda}$ is
entanglement-annihilating. More importantly, we have also shown that for
the same value $\cE_{\lambda}\otimes\cE_{\lambda}\otimes\cE_\lambda$ is not
entanglement-annihilating. Let us formally write
$\cE_{\lambda}\otimes\cE_{\lambda}\otimes\cE_\lambda=
(\cI\otimes\cI\otimes\cE_\lambda)(\cE_\lambda\otimes\cE_\lambda\otimes\cI)$.
Since channels of the form $\cI\otimes\cI\otimes\cE$ can only decrease the 
entanglement and since $\cE_{\lambda}^{\otimes 3}[\ket{\rm GHZ}\bra{\rm GHZ}]$
is entangled for $\lambda=1/\sqrt{3}$, we can conclude that
also $(\cE_\lambda\otimes\cE_\lambda)\otimes\cI [\ket{\rm GHZ}\bra{\rm GHZ}]$
is entangled. But this is in contradiction with the assumption
that $\cE_{\lambda}\otimes\cE_\lambda$ is entanglement-breaking.
Based on this example we can conclude that
\be
\sT_{\rm EA}\not\subset\sT_{\rm EB}\,,
\ee
that is, there are entanglement-annihilating channels which are
not entanglement-breaking.

%%%%%%%%%%%%%%%%%%%%%%%%%%%%%%%%%%%%%%%%%%%%%%%%%%%%%%%%%%%%%%%%%%%%%%%%% 
\section{Summary}
%%%%%%%%%%%%%%%%%%%%%%%%%%%%%%%%%%%%%%%%%%%%%%%%%%%%%%%%%%%%%%%%%%%%%%%%% 
In  this paper we introduced the concept of entanglement-annihilating  
channels as the channels that completely destroy any entanglement within
the systems they act on. We investigated the structural properties of 
the set of these channels and its relation to the set of entanglement-breaking
channels $\sT_{\rm EB}$, i.e. channels that completely destroy 
entanglement between the subsystem they act on and the 
rest of the composite system (see Fig.~\ref{fig:1}). 
In particular, we have shown that
\be
\sT_{\rm EA}\cap\sT_{\rm EB}\neq\emptyset\,,\\
\sT_{\rm EB}\not\subset\sT_{\rm EA}\not\subset\sT_{\rm EB}\,.
\ee
That is, there are channels which are simultaneously
entanglement-breaking and entanglement-annihilating, but
also channels possessing only one of this features.
The set of entanglement-annihilating channels $\sT_{\rm EA}$ is convex.
Moreover, a composition of an entanglement-annihilating channel and an arbitrary
channel results in an entanglement-annihilating channel, i.e. the property
of being entanglement-annihilating is preserved under channel 
composition. 

One of the above relations we were able to prove by analyzing
the family of local depolarizing channels. We defined 
the so-called $k$-local channels as channels 
of the form $\cE\otimes\cdots\otimes\cE$.
That is, the same noise $\cE$ is applied on each individual subsystem 
forming a composite $k$-partite system. We investigated when 
a single-particle channel $\cE$ constitutes
a $k$-locally entanglement-annihilating channel ($k$-LEA), or a $k$-locally
entanglement-breaking channel ($k$-LEB). In particular, for depolarizing 
qubit channels we found that for $\lambda\leq 1/\sqrt{3}$ the channel is 
$2$-locally entanglement annihilating, while for $\lambda>1/3$ it is
not locally entanglement-breaking for any $k$. Moreover, for
$\lambda>0.5567$ the qubit depolarizing channel is not $k$-LEA for 
all $k\geq 3$. We found the following set relations
\be
\sT_{\rm EB}^1=\sT_{\rm 2-LEB}=\sT_{\rm 3-LEB}=\cdots =\sT_{\rm \infty-LEB}\,,\\
\sT_{\rm EB}^1\subset
\sT_{\rm\infty-LEA}\subset\cdots\subset\sT_{\rm 3-LEA}\subsetneq\sT_{\rm 2-LEA}\,,
\ee 
where $\sT_{\rm EB}^1$ is the set of entanglement-breaking channels 
of a single particle. 

The introduced concept of entanglement-annihilating channels 
opens several interesting mathematical and physical questions
related to generic properties of entanglement dynamics.  
For example, we left open the problem of complete characterization of 
entanglement-annihilating channels. For practical purposes, 
it would be also be of interest to find an efficient testing algorithm 
for entanglement-annihilating channels.

%%%%%%%%%%%%%%%%%%%%%%%%%%%%%%%%%%%%%%%%%%%%%%%%%%%%%%%%%%%%%%%%%%%%%%%%% 
\section*{Acknowledgments}
We acknowledge financial support of the European Union project 
HIP FP7-ICT-2007-C-221889, and of the projects
VEGA-2/0092/09, CE SAS QUTE and MSM0021622419. We also thank Daniel Nagaj for
his comments on the manuscript.
%%%%%%%%%%%%%%%%%%%%%%%%%%%%%%%%%%%%%%%%%%%%%%%%%%%%%%%%%%%%%%%%%%%%%%%%%

\section*{References}

\end{document}